\newtheorem{theorem}{Theorem}
\newtheorem{lemma}{Lemma}
\def\BibTeX{{\rm B\kern-.05em{\sc i\kern-.025em b}\kern-.08em
    T\kern-.1667em\lower.7ex\hbox{E}\kern-.125emX}}
\begin{document}

\title{From Spectrum Wavelet to Vertex Propagation: Graph Convolutional Networks
	Based on Taylor Approximation
%
}
\author{Songyang~Zhang,~\IEEEmembership{Member,~IEEE,}
	Han~Zhang, Shuguang~Cui,~\IEEEmembership{Fellow,~IEEE,}~%
	and~Zhi~Ding,~\IEEEmembership{Fellow,~IEEE}
\IEEEcompsocitemizethanks{\IEEEcompsocthanksitem 
	S. Zhang, and Z. Ding are with Department of Electrical and Computer Engineering, Univerisity of California, Davis, CA, 95616.\protect\\ (E-mail: sydzhang@ucdavis.edu, and zding@ucdavis.edu).
\IEEEcompsocthanksitem 
H. Zhang was with Department of Electrical and Computer Engineering, Univerisity of California, Davis, CA, 95616.\protect\\ (E-mail: hanzh@ucdavis.edu).
\IEEEcompsocthanksitem S. Cui is currently with the Shenzhen Research Institute of Big Data and Future Network of Intelligence Institute (FNii), the Chinese University of Hong Kong, Shenzhen, China, 518172. \protect\\(E-mail: shuguangcui@cuhk.edu.cn).}}%

\maketitle
\begin{abstract}
	Graph convolutional networks (GCN) have been recently utilized to 
	extract the underlying structures of datasets with some labeled data and high-dimensional features.
	Existing GCNs mostly rely on a first-order Chebyshev approximation 
	of graph wavelet-kernels. 
Such a generic propagation model does not
always suit the various datasets and their features.
	This work revisits the fundamentals of graph wavelet and explores
	the utility of
	signal propagation in the vertex domain to approximate the spectral wavelet-kernels.
	We first derive the conditions for representing the graph wavelet-kernels 
	via vertex propagation. 
	We next propose alternative propagation models for GCN layers based on Taylor expansions. 
	We further analyze the choices of detailed graph representations for TGCNs.
	Experiments on citation networks, multimedia datasets and synthetic graphs demonstrate the advantage of Taylor-based GCN (TGCN) in the node classification problems over the traditional GCN methods.
	
\end{abstract}

\begin{IEEEkeywords}
	graph convolutional network, graph spectral wavelet, Taylor approximation
\end{IEEEkeywords}

\section{Introduction}
\IEEEPARstart{L}{earning} and signal processing over graph models have gained significant
traction owing to their demonstrated abilities 
to capture underlying 
data interactions.
Modeling each data point as a node and their interactions as edges in a graph, 
graph-based methods have been adopted in various signal
processing and analysis tasks, such as  
semi-supervised classification \cite{kipf2017semi,zhu2003semi}, spectral clustering \cite{white2005spectral,buhler2009spectral}, 
link prediction \cite{grover2016node2vec} and graph classification \cite{niepert2016learning}. 
For example, in \cite{8101025}, an undirected graph is constructed based on a Gaussian-distance model to 
capture geometric correlations among points in a 
point cloud, with which several graph-based filters 
have been developed to extract contour features 
of objects.

Among various graph-based tools, graph signal processing  
(GSP) has emerged as an efficient analytical tool for processing graph-modeled signals \cite{ortega2018graph},\cite{shuman2013emerging}. 
Based on a graph Fourier space defined by the eigenspace of the representing adjacency
or Laplacian matrix, 
GSP filters have found applications in 
practice, including bridge health monitoring \cite{chen2014semi}, point cloud denoising \cite{schoenenberger2015graph}, and image classification \cite{sandryhaila2013discrete1}. 
Leveraging graph Fourier transform \cite{sandryhaila2013discrete}, graph wavelet \cite{hammond2011wavelets} and graph spectral convolution \cite{shuman2012windowed} 
can extract additional features from graph signals. For example,
graph convolutional filters have been successful in 
edge detection 
and video segmentation \cite{zhang2020hypergraph}. 

Although GSP-based spectral filtering has demonstrated successes in 
a variety of applications, it still suffers from the high-complexity 
of spectrum computation and the need to 
select suitable propagation models. 
To efficiently extract signal features and integrate traditional GSP 
within the machine learning framework, 
graph convolutional networks (GCN) \cite{kipf2017semi} have
been developed for semi-supervised classification problems. 
Approximating graph spectral convolution with first-order Chebyshev expansions, 
GCN has been effective in such learning tasks. Furthermore, 
different GCN-related graph learning architectures, 
including personalized propagation 
of neural predictions (PPNP)\cite{klicpera2018predict} and N-GCN\cite{abu2018n}, 
have also been developed to process graph-represented datasets.
However, some limitations remain with the traditional GCN based on Chebyshev expansions. For example, 
traditional GCN 
requires strong assumptions on maximum eigenvalues and Chebyshev coefficients 
for approximating spectral convolution, 
at the cost of possible information loss 
when compared against basic
convolutional filters. 
Furthermore,  systematic selection and design of graph representations for GCN remain elusive.

Our goal is to improve GCN by exploring its 
relationship with GSP.
Specifically in this work, we explore the 
process from spectrum wavelet 
to vertex propagation, and investigate 
alternative designs for 
graph convolutional networks. Our contributions can be summarized as follows:
 \begin{itemize}
 	\item We revisit graph spectral convolution in GSP and define
conditions for approximating spectrum wavelet via propagation in the vertex domain.
These conditions could provide insights to design GCN layers.
 	\item  We propose alternative propagation models for the GCN layers and 
 	develop a Taylor-based graph convolutional networks (TGCN) based on the aforementioned approximation conditions.
 \item We illustrate the effectiveness of the proposed frameworks 
 over several well-known datasets in comparison with
 other GCN-type and graph-based methods in node
 signal classification.
 \item We also provide an interpretability discussion on the use of Taylor expansion, together with guidelines on selecting
 suitable graph representations for GCN layers.
 \end{itemize}

In terms of the manuscript organization, we first provide an 
overview on graph-based tools and review the fundamentals 
of GCN in Section II and Section III, respectively. Next, 
we present the theoretical motivation and basic design of the 
Taylor-based graph convolutional networks (TGCN) in Section IV. 
Section V discusses the interpretability 
of Taylor expansion in approximation, and highlights
the difference between TGCN and traditional GCN-based methods. 
We report the experimental results of the proposed TGCN framework
on different datasets in Section VI, before concluding
in Section VII.

\section{Related Work}
In this section, we provide an overview on state-of-the-art graph signal processing (GSP) 
and graph convolutional networks (GCN). 

\subsection{Graph Signal Processing} Graph signal processing (GSP) has emerged as an exciting and promising new tool for 
processing large datasets with complex structures \cite{ortega2018graph,shuman2013emerging}. 
Owing to its power to extract underlying relationships among signals,
GSP has achieved significant success in generalizing
traditional digital signal processing (DSP) and 
processing datasets with complex underlying features. 
Modeling data points and their interactions as graph nodes
and graph edges, respectively,
graph Fourier space could be defined according to the eigenspace 
of a graph representing matrix, 
such as the Laplacian or adjacency matrix, 
to facilitate data processing operations such as
denoising \cite{wagner2006distributed}, 
filter banks \cite{narang2010local}, and compression \cite{zhu2012approximating}. 
The framework of GSP can be further generalized over
the graph Fourier space to include sampling theory \cite{chen2015discrete}, 
graph Fourier transform \cite{sandryhaila2013discrete}, frequency analysis \cite{sandryhaila2014discrete}, graph filters \cite{sandryhaila2013discrete1}, graph wavelet \cite{hammond2011wavelets} and graph stationary process \cite{marques2017stationary}.
In addition, GSP has also been considered for
high-dimensional geometric signal processing, such as hypergraph signal processing \cite{zhang2019introducing} and topological signal processing \cite{barbarossa2020topological}.

\subsection{Graph Convolutional Networks} 
Graph-based learning machines 
have become useful tools in data analysis. 
Leveraging
graph wavelet processing \cite{hammond2011wavelets}, graph convolutional networks (GCN) approximate the spectral wavelet convolution via first-order Chebyshev expansions \cite{kipf2017semi} and have demonstrated notable successes in 
semi-supervised learning tasks. Recent works,
e.g., \cite{klicpera2018predict}, 
have developed customized propagation of 
neural predictions (PPNP) to 
integrate PageRank \cite{page1999pagerank} with GCNs. 
Other typical graph-based learning machines include GatedGCN \cite{bresson2017residual},  GraphSAGE \cite{hamilton2017inductive}, Gaussian Mixture Model Network (MoNet) \cite{monti2017geometric}, Graph Attention Networks (GAT) \cite{velivckovic2017graph}, Differential Pooling (DiffPool) \cite{ying2018hierarchical}, Geom-GCN \cite{pei2020geom}, Mixhop \cite{abu2019mixhop}, Diffusion-Convolutional Neural Networks (DCNN) \cite{atwood2016diffusion}, and Graph Isomorphism Network (GIN) \cite{xu2018powerful}. 
For additional information, interested readers are referred to 
an extensive 
literature review \cite{bronstein2017geometric} and a 
survey paper \cite{dwivedi2020benchmarking}.
\begin{table}[]
	\centering
	\caption{Notations and Definitions}
	\begin{tabular}{l|l}
		\hline
		Notation & Definition \\\hline
		$\mathcal{G}=(\mathcal{V,E})$ &  The undirected graph          \\
		$\mathcal{V}$& The set of nodes in the graph $\mathcal{G}$     \\
	    $\mathcal{E}$& The set of edges in the graph $\mathcal{G}$     \\ 
	    $\mathbf{A}$& The adjacency matrix   \\
	    $\mathbf{L}$& The Lapalcian matrix   \\  
	    $\mathbf{P}$& The general propagation matrix of the graph $\mathcal{G}$  \\
	    $\mathbf{D}$& The diagonal matrix of node degree   \\   
	    $\lambda$ & The eigenvalue of the propagation matrix\\
	     $\lambda_{max}$ & The maximal eigenvalue\\
	    $\mathbf{V}$ & The spectral matrix with eigenvectors of $\mathbf{P}$ as each column\\
	    $\mathbf{x}$ & The graph signal vector\\
	     $\mathbf{X}^{(l)}$ & The feature data at $l$ layer\\
	     $\mathbf{I}$ & The identity matrix\\
	    \hline
	\end{tabular}
\label{nota}
\end{table}

\section{Graph Wavelet and Graph Convolutional Networks}\label{GWT}
In this section, we first review the fundamentals of graph spectral convolution and wavelets, necessary for the development of propagation models of the GCN layers. We will 
then briefly introduce the structures of traditional GCN \cite{kipf2017semi}. For convenience, some of the important notations and definitions are illustrated in Table \ref{nota}.

\subsection{Graph Spectral Convolution and Wavelet-Kernels}

An undirected graph $\mathcal{G}=(\mathcal{V,E})$ with $N=|\mathcal{V}|$ nodes
can be represented by a representing matrix (adjacency/Laplacian) decomposed as $\mathbf{P}=\mathbf{V\Sigma V}^T\in\mathbb{R}^{N\times N}$, where the eigenvectors $\mathbf{V}=\{\mathbf{f}_1,\mathbf{f}_2,\cdots,\mathbf{f}_N\}$ form the 
graph Fourier basis and the eigenvalues $\lambda_i$'s represent graph frequency \cite{sandryhaila2013discrete}. 

In GSP \cite{shuman2012windowed}\cite{shi2019graph}, 
graph Fourier transform of convolution between two signals is a
product between their respective Fourier transforms denoted by $\diamond$, i.e., 
\begin{equation}\label{conv}
\mathbf{x}\diamond \mathbf{y}=\mathcal{F}_C^{-1}(\mathcal{F}_C(\mathbf{x})\circ\mathcal{F}_C(\mathbf{y})),	
\end{equation}
where $\mathcal{F}_C(\mathbf{x})=\mathbf{V}^T\mathbf{x}$ refers to the graph Fourier transform (GFT) of signals $\mathbf{x}$, $\mathcal{F}_C^{-1}(\hat{\mathbf{x}})=\mathbf{V}\hat{\mathbf{x}}$ is the inverse GFT and $\circ$ is the Hadamard product. This definition generalizes the 
property 
that convolution in the vertex domain is equivalent to product in
the corresponding graph spectral domain.

In \cite{hammond2011wavelets}, the graph wavelet transform is defined according to
graph spectral convolution. Given a spectral graph wavelet-kernel 
$\hat{\mathbf{g}}=[g(\lambda_1), g(\lambda_2), \cdots, g(\lambda_N)]^T$ with kernel function $g(\cdot)$, 
the graph wavelet operator is defined as 
\begin{align}
T_g(\mathbf{x})&=\mathbf{V}(\hat{\mathbf{g}}\circ (\mathbf{V}^T\mathbf{x}))\\
&=\mathbf{V}
\begin{bmatrix}
&g(\lambda_1) &\cdots &0\\
&0 &\ddots &0\\
&0 &\cdots &g(\lambda_N)
\end{bmatrix}
\mathbf{V}^T\mathbf{x}.\label{wavelet}
\end{align}
Note that graph wavelet can be interpreted as a graph convolutional filter with a spectrum wavelet-kernel $\hat{\mathbf{g}}$. Depending on the datasets and applications, 
different kernel functions may be utilized 
in (\ref{wavelet}). 

\subsection{Graph Convolutional Networks and Their Limitations}

To overcome the complexity 
for computing the spectrum matrix $\mathbf{V}$ and the 
difficulty of seeking suitable wavelet-kernel functions, 
one framework of GCN developed in \cite{kipf2017semi}
considers a first-order Chebyshev expansion. 
Considering Chebyshev polynomials $T_K(x)$ up to $K^{th}$ orders and the Laplacian matrix as the propagation matrix, the graph
convolutional filter with wavelet-kernel $\hat{\mathbf{g}}$ is approximated by 
\begin{equation}\label{cheby}
	T_g(\mathbf{x})\approx\sum_k\theta_kT_k(\tilde{\mathbf{L}})\mathbf{x},
\end{equation}
where $\tilde{\mathbf{L}}=2 \mathbf{L}/\lambda_{\max}-\mathbf{I}_N$. With careful choice of $\lambda_{\max}$ and parameters $\theta_k$, the graph convolutional filter 
can be further approximated by the 1st-order Chebyshev expansion
\begin{equation}
	T_g(\mathbf{x})\approx \theta(\mathbf{I}_N+\mathbf{D}^{-\frac{1}{2}}\mathbf{A}\mathbf{D}^{-\frac{1}{2}})\mathbf{x},
\end{equation}
where $\mathbf{D}$ is the diagnal matrix of node degree.
From here, by generalizing the approximated graph convolutional filter to a signal $\mathbf{X}\in\mathbb{R}^{N\times C}$ with $C$ features for each node, the filtered signals 
can be written as 
\begin{equation}\label{DAD}
\mathbf{Z}=\tilde{\mathbf{D}}^{-\frac{1}{2}}\tilde{\mathbf{A}}\tilde{\mathbf{D}}^{-\frac{1}{2}}\mathbf{X}\mathbf{\Theta},
\end{equation} 
where $\tilde{\mathbf{A}}=\mathbf{A}+\mathbf{I}_N$, $\tilde{D}_{ii}=\sum_j\tilde{A}_{ij}$, and $\mathbf{\Theta}\in\mathbb{R}^{C\times F}$ is the parameter matrix. Furthermore, by integrating the nonlinear functions within 
the approximated convolutional filters, a two-layer GCN can be designed with message propagation as
\begin{align}\label{gcn_prop}
\mathbf{Z}_{GCN}=&\mathrm{softmax}
\left(\tilde{\mathbf{D}}^{-\frac{1}{2}}\tilde{\mathbf{A}}\tilde{\mathbf{D}}^{-\frac{1}{2}}
\right.\nonumber\\
&\left.\mathrm{RELU}(\tilde{\mathbf{D}}^{-\frac{1}{2}}\tilde{\mathbf{A}}\tilde{\mathbf{D}}^{-\frac{1}{2}}\mathbf{XW}^{(0)}))\mathbf{W}^{(1)}\right),
\end{align}
where $\mathbf{W}^{(0)}\in\mathbb{R}^{N\times H}$ and $\mathbf{W}^{(1)}\in\mathbb{R}^{H\times C}$ are the parameters for the $H$ hidden units. Here we use standard
terminologies of ``softmax'' and ``RELU'' from
deep learning neural networks. 

Although GCN has achieved success in some
applications, some drawbacks remain. First, it relies on 
several strong assumptions
to approximate the original convolutional filters. 
For example, $\lambda_{\max}=2$ are used 
to approximate in implementation due to the range of variables in Chebyshev expansions, and the Chebyshev coefficients are set to $\theta_1=-\theta_0=-\theta$ to obtain 
Eq.~(\ref{DAD}). These assumptions may compromise
the efficacy of spectral convolution. Second, the graph representation $\tilde{\mathbf{D}}^{\frac{1}{2}}\tilde{\mathbf{A}}\tilde{\mathbf{D}}^{\frac{1}{2}}$ may not 
always be the optimal choice while the Chebyshev 
approximation limits the type of representing matrix. 
We provide a more detailed discussion in Section V. 
In addition, it remains unclear as to how best to derive a suitable kernel-function $\hat{\mathbf{g}}$ and its approximation.
Moreover, insights in terms of interpretability is
highly desirable from spectral wavelet convolution 
to vertex propagation.

To explore alternatives in designing propagation model for 
GCNs, we focus 
on the process between graph spectral wavelet-kernels 
and propagation in the vertex domain. 
We will further propose alternative propagation
models for GCNs.

\section{Taylor-based Graph Convolutional Networks}
In this section, we investigate conditions needed
for approximating the spectral convolution via vertex 
propagation.
Next, we propose alternative propagation models for graph
convolution layers based on Taylor expansion, where the 
general convolutional filter can be written as 
\begin{equation}
\mathbf{Z}=G_\alpha(\mathbf{P})\mathbf{X}\mathbf{\Theta},
\end{equation}
where $G_\alpha(\mathbf{P})$ is a polynomial function with 
parameter $\alpha$, $\mathbf{P}$ is the representing matrix 
of the graph, and $\mathbf{\Theta}$ are parameters of 
feature projection.

\subsection{Approximation of Spectral Convolution}
We first present the theoretical motivation for designing
a polynomial-based propagation model, and its 
relationship to the graph spectral wavelets.
For a polynomial filter in GSP, let $\mathbf{P}$ be the representing (adjacency/Laplacian) matrix.
We can obtain the following property.
\begin{lemma}
	Given a GSP polynomial filter $\mathbf{H}=h(\mathbf{P})=\sum_k \alpha_k \mathbf{P}^k$, 
the filtered signals are calculated by
	\begin{equation}
	\mathbf{Hx}=h(\mathbf{P})\mathbf{x}=\sum_{r=1}^{N}h(\lambda_r) \mathbf{f}_r(\mathbf{f}_r^{\mathrm{T}}\mathbf{x}),
	\end{equation}
	where $\mathbf{f}_r$'s are the graph spectrum and $\lambda_r$'s are the eigenvalues of $\mathbf{P}$ related to graph frequency.
\end{lemma}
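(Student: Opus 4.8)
The plan is to reduce the claim to the spectral decomposition of the representing matrix together with the linearity of the polynomial $h(\cdot)$. Since $\mathbf{P}$ is a symmetric representing matrix (adjacency or Laplacian of an undirected graph), it admits the eigendecomposition $\mathbf{P}=\mathbf{V}\mathbf{\Sigma}\mathbf{V}^{\mathrm{T}}$ with $\mathbf{V}=[\mathbf{f}_1,\ldots,\mathbf{f}_N]$ orthonormal, so that $\mathbf{V}^{\mathrm{T}}\mathbf{V}=\mathbf{V}\mathbf{V}^{\mathrm{T}}=\mathbf{I}_N$ and $\mathbf{\Sigma}=\mathrm{diag}(\lambda_1,\ldots,\lambda_N)$. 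First I would establish that $\mathbf{P}^k=\mathbf{V}\mathbf{\Sigma}^k\mathbf{V}^{\mathrm{T}}$ for every nonnegative integer $k$ by a short induction: the base case $k=0$ is $\mathbf{I}_N=\mathbf{V}\mathbf{V}^{\mathrm{T}}$, and the inductive step uses $\mathbf{P}^{k+1}=\mathbf{P}^k\mathbf{P}=\mathbf{V}\mathbf{\Sigma}^k\mathbf{V}^{\mathrm{T}}\mathbf{V}\mathbf{\Sigma}\mathbf{V}^{\mathrm{T}}=\mathbf{V}\mathbf{\Sigma}^{k+1}\mathbf{V}^{\mathrm{T}}$, where the interior factor $\mathbf{V}^{\mathrm{T}}\mathbf{V}$ collapses to $\mathbf{I}_N$.

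Next I would apply linearity. Summing this identity against the coefficients $\alpha_k$ gives $\mathbf{H}=h(\mathbf{P})=\sum_k\alpha_k\mathbf{P}^k=\mathbf{V}\bigl(\sum_k\alpha_k\mathbf{\Sigma}^k\bigr)\mathbf{V}^{\mathrm{T}}=\mathbf{V}\,h(\mathbf{\Sigma})\,\mathbf{V}^{\mathrm{T}}$, where $h(\mathbf{\Sigma})=\mathrm{diag}(h(\lambda_1),\ldots,h(\lambda_N))$ because each $\mathbf{\Sigma}^k$ is diagonal with entries $\lambda_r^k$ and the weighted sum acts entrywise. This is precisely the scalar spectral-mapping statement for the polynomial $h$, and it shows the filter is diagonalized by the same Fourier basis $\mathbf{V}$ that defines $\mathcal{F}_C$.

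Then I would expand the matrix product columnwise: $\mathbf{V}\,h(\mathbf{\Sigma})\,\mathbf{V}^{\mathrm{T}}=\sum_{r=1}^{N}h(\lambda_r)\,\mathbf{f}_r\mathbf{f}_r^{\mathrm{T}}$, the standard outer-product form of a diagonal conjugation. Multiplying on the right by $\mathbf{x}$ and regrouping the scalar factor yields $\mathbf{Hx}=\sum_{r=1}^{N}h(\lambda_r)\,\mathbf{f}_r(\mathbf{f}_r^{\mathrm{T}}\mathbf{x})$, which is the claimed identity; here $\mathbf{f}_r^{\mathrm{T}}\mathbf{x}=[\mathcal{F}_C(\mathbf{x})]_r$ is the $r$-th graph Fourier coefficient, so the polynomial filter simply rescales frequency component $r$ by $h(\lambda_r)$.

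There is no genuine obstacle in this argument; the only point requiring care is that it uses $\mathbf{V}^{-1}=\mathbf{V}^{\mathrm{T}}$, i.e.\ an orthonormal eigenbasis, which is guaranteed by the undirected-graph (symmetric $\mathbf{P}$) assumption. For a merely diagonalizable, non-symmetric representing matrix one would instead write $\mathbf{P}=\mathbf{V}\mathbf{\Sigma}\mathbf{V}^{-1}$ and replace $\mathbf{f}_r^{\mathrm{T}}$ by the $r$-th row of $\mathbf{V}^{-1}$, so the stated form implicitly relies on symmetry. Repeated eigenvalues cause no difficulty, since the proof never needs the eigenvectors to be unique, only that they furnish an orthonormal basis of $\mathbb{R}^{N}$.
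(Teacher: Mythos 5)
Your proof is correct and follows essentially the same route as the paper's: diagonalize $\mathbf{P}=\mathbf{V}\mathbf{\Sigma}\mathbf{V}^{\mathrm{T}}$, use $\mathbf{V}^{\mathrm{T}}\mathbf{V}=\mathbf{I}$ to collapse $\mathbf{P}^k$ to $\mathbf{V}\mathbf{\Sigma}^k\mathbf{V}^{\mathrm{T}}$, and sum over $k$ by linearity to replace $\lambda_r^k$ with $h(\lambda_r)$. Your explicit remark about requiring an orthonormal eigenbasis (symmetric $\mathbf{P}$ from the undirected-graph assumption) is a point the paper uses implicitly, but the argument is otherwise identical.
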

\begin{proof}
	Let $\mathbf{V}=[\mathbf{f}_1,\cdots,\mathbf{f}_N]$ and $\Sigma={\rm diag}([\lambda_1,\cdots,\lambda_N])$. Since $\mathbf{V}^\mathrm{T}\mathbf{V}=\mathbf{I}$, 
	we have
\begin{align}
	\mathbf{P}^k\mathbf{x}&=\underbrace{\mathbf{V}\Sigma\mathbf{V}^\mathrm{T}\mathbf{V}\Sigma\mathbf{V}^\mathrm{T}\cdots \mathbf{V}\Sigma\mathbf{V}^\mathrm{T}}_{k\quad {\rm times}}\mathbf{x}\nonumber\\
	&=\mathbf{V}\Sigma^k\mathbf{V}^\mathrm{T}\mathbf{x}\\
	&=\sum_{r=1}^N \lambda_r^k (\mathbf{f}_r^{\mathrm{T}}\mathbf{x})\mathbf{f}_r.
\end{align}
	Since $\mathbf{H}=h(\mathbf{P})=\sum_k \alpha_k\mathbf{P}^k$ is a polynomial graph filter, we can directly obtain
	\begin{align}
	\mathbf{Hx}&=\sum_k \sum_{r=1}^N \alpha_k\lambda_r^k  (\mathbf{f}_r^{\mathrm{T}}\mathbf{x})\mathbf{f}_r\\
	&=\sum_{r=1}^N h(\lambda_r)  (\mathbf{f}_r^{\mathrm{T}}\mathbf{x})\mathbf{f}_r.
	\end{align}
	
\end{proof}
This lemma shows that
the response of the filter to an exponential is the same exponential amplified by a gain that is the frequency response of the filter at the frequency of the exponential \cite{ortega2018graph}. The exponentials are the eigenfunctions/eigenvectors, 
similar to complex exponential signals in linear systems.

Looking into the graph wavelet convolutional filter in Eq. (\ref{wavelet}), the wavelet-kernel function $g(\cdot)$ operates to modify frequency coefficients $\lambda_r$'s. 
Thus, we have the following property of transferring spectrum wavelet to vertex propagation.

\begin{theorem}
Given a polynomial wavelet kernel function $g(\cdot)$, the GSP convolutional filter on signal $\mathbf{x}$ is calculated as
	\begin{equation}
	T_g(\mathbf{x})=g(\mathbf{P})\mathbf{x}.
	\end{equation}
\end{theorem}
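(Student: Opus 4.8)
The plan is to read off the statement directly from the spectral definition of the wavelet operator in Eq.~(\ref{wavelet}) and then invoke Lemma~1 with the particular polynomial $h=g$. First I would rewrite the operator $T_g(\mathbf{x})=\mathbf{V}\,\mathrm{diag}(\hat{\mathbf{g}})\,\mathbf{V}^{\mathrm{T}}\mathbf{x}$ in its expanded sum-over-eigencomponents form, using $\mathbf{V}=[\mathbf{f}_1,\cdots,\mathbf{f}_N]$ and $\hat{\mathbf{g}}=[g(\lambda_1),\cdots,g(\lambda_N)]^{\mathrm{T}}$, to get
\begin{equation}
T_g(\mathbf{x})=\sum_{r=1}^{N} g(\lambda_r)\,\mathbf{f}_r(\mathbf{f}_r^{\mathrm{T}}\mathbf{x}).
\end{equation}
This is just the statement that a diagonal operator in the Fourier basis acts by scaling each spectral coefficient $\mathbf{f}_r^{\mathrm{T}}\mathbf{x}$ by $g(\lambda_r)$.

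Next, since $g(\cdot)$ is assumed polynomial, say $g(t)=\sum_k \alpha_k t^k$, the matrix function $g(\mathbf{P})=\sum_k \alpha_k \mathbf{P}^k$ is a well-defined GSP polynomial filter, so Lemma~1 applies with $h=g$. Lemma~1 then gives
\begin{equation}
g(\mathbf{P})\mathbf{x}=\sum_{r=1}^{N} g(\lambda_r)\,\mathbf{f}_r(\mathbf{f}_r^{\mathrm{T}}\mathbf{x}).
\end{equation}
Comparing the two displays term by term yields $T_g(\mathbf{x})=g(\mathbf{P})\mathbf{x}$, which is the claim.

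There is no serious obstacle here; the result is essentially a corollary of Lemma~1, and the only thing to be careful about is that Lemma~1 (and the matrix polynomial $g(\mathbf{P})$) requires $g$ to be polynomial, which is exactly the hypothesis. If one wanted a self-contained argument not citing Lemma~1, the mild bookkeeping step would be to verify $\mathbf{P}^k=\mathbf{V}\Sigma^k\mathbf{V}^{\mathrm{T}}$ by telescoping $\mathbf{V}^{\mathrm{T}}\mathbf{V}=\mathbf{I}$, but that is already done in the proof of Lemma~1 and need not be repeated. I would close by remarking that this identity is precisely what makes vertex-domain propagation a faithful realization of a spectral wavelet whenever the kernel is polynomial, motivating the Taylor-approximation construction that follows.
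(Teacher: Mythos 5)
Your proposal is correct and follows essentially the same route as the paper's own proof: expand $T_g(\mathbf{x})$ as $\sum_{r=1}^{N} g(\lambda_r)\,\mathbf{f}_r(\mathbf{f}_r^{\mathrm{T}}\mathbf{x})$ and then invoke Lemma~1 with $h=g$. The only difference is that you spell out the intermediate steps slightly more explicitly than the paper does.
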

\begin{proof}
	Since the convolution filter $T_g(\mathbf{x})$ can be written in
	\begin{equation}
		T_g(\mathbf{x})=\sum_{r=1}^{N}g(\lambda_r) \mathbf{f}_r(\mathbf{f}_r^{\mathrm{T}}\mathbf{x}),
	\end{equation}
the proof is straightforward by invoking with \textit{Lemma} 1.
\end{proof}
This theorem indicates that we can bypass computing
the spectrum by implementing the convolution directly 
in vertex domain, since 
the wavelet kernel $g(\cdot)$ is polynomial or 
can be approximated by a polynomial expansion. We can see that the Chebyshev expansion is a special case 
of \textit{Theorem 1}. In addition to
Chebyshev expansion, Legendre \cite{pons1998legendre} and Taylor \cite{linnainmaa1976taylor} expansions can also approximate the spectral convolution. In addition, other polynomial design on the wavelet-function $g(\cdot)$ are also possible.

\subsection{Taylor-based Propagation Model}\label{model}
We now provide alternative propagation models for the GCN layers based on Taylor expansions,
with which the wavelet-kernel function $g(x)$ can be approximated via
\begin{equation}\label{taylor}
g(x)\approx \sum_{k=0}^K  \theta_k(x-a)^k.
\end{equation}
Here, $\theta_k={g^{(k)}(a)}/{n!}$ is the expansion coefficients.

Since the Taylor approximation of $g(x)$ in Eq. (\ref{taylor}) is a polynomial function of the variable $x$ which meets the condition in \textit{Theorem} 1, the graph spectral convolutional filter can be approximated as
\begin{align}\label{TGCN}
T_g(\mathbf{x})
\approx\sum_{k=0}^K \theta_k(\mathbf{P}-{\rm diag}(\bm{\Phi}))^k\mathbf{x},
\end{align}
where $\bm{\Phi}$ is a generalization of the parameter $a$. We will discuss further the intuition of applying Taylor expansion and its difference with Chebyshev approximation in Section V.

We can develop different models based on Eq. (\ref{TGCN}) to 
develop the Taylor-based GCN (TGCN).
The $\mathbf{P}$ matrix here can be any practical 
graph representing matrix
used to capture overall information of the graph. 
For example, typical representing matrices include the 
adjacency matrix $\mathbf{A}$, the Laplacian matrix $\mathbf{L}$,
or the normalized propagation matrix $\tilde{\mathbf{D}}^{-\frac{1}{2}}\tilde{\mathbf{A}}\tilde{\mathbf{D}}^{-\frac{1}{2}}$. 
Section \ref{exper} provides
further discussions on the selection of 
representing matrix.

We now provide several types of TGCN design.

\textbf{Type-1 First-Order TGCN}: Similar to the traditional GCN, we first consider 
TGCN based on the first-order Taylor expansions with a simpler diagonal matrix ${\rm diag}(\bm{\Phi})=\phi\mathbf{I}_N$. Letting $K=1$,
Eq. (\ref{TGCN}) can be written as
\begin{align}
T_g(\mathbf{x})&\approx[(\theta_0-\theta_1\phi)\mathbf{I}_N+\theta_1\mathbf{P}]\mathbf{x}\\
&=\theta'(\mathbf{P}+\alpha \mathbf{I}_N)\mathbf{x},
\end{align}
where $\theta'=\theta_1$ and $\alpha=\frac{\theta_0-\theta_1\phi}{\theta_1}$ are the new parameters for the convolutional filter. As a result, the GCN layer with generalized signal $\mathbf{X}\in\mathbb{R}^{N\times C}$ can be designed as 
\begin{equation}
\mathbf{X}^{(l+1)}=(\mathbf{P}+\alpha_l \mathbf{I}_N)\mathbf{X}^{(l)}\bm{\Theta}_l
\end{equation}
where $\alpha_l$ and $\bm{\Theta}_l$ are the trainable variables for the $l^{th}$ layer. 

\textbf{Type-2 First-Order TGCN}: We also consider more 
general diagonal matrix in place of $\alpha\mathbf{I}_N$, i.e.,
\begin{equation}
\mathbf{X}^{(l+1)}=(\mathbf{P}+{\rm diag}(\bm{\beta}_l))\mathbf{X}^{(l)}\bm{\Theta}_l,
\end{equation}
where $\bm{\beta}_l$ and $\bm{\Theta}_l$ are the parameters of the $l^{th}$ layer. Here, the self-influence for each node varies from node to node, 
whereas each node affects itself equivalently in the type-1 first-order TGCN model.

\textbf{Type-3 $k^{th}$-Order TGCN}: We also consider the higher-order polynomial propagation models for each layer. To avoid overfitting and reduce the complexity, we 
require $\theta_k=\theta$ for all $k$
and the diagonal matrix as $\alpha \mathbf{I}_N$ 
in Eq. (\ref{TGCN}). Then the resulting TGCN layer becomes
\begin{equation}
\mathbf{X}^{(l+1)}=[\sum_k(\mathbf{P}+\alpha_{l}\mathbf{I}_N)^k]\mathbf{X}^{(l)}\bm{\Theta}_l.
\end{equation}
Compared to the 1st-order approximation,  
the higher-order approximation contains
more trainable parameters and computations, 
resulting in higher implementation complexity.

\textbf{Type-4 $k^{th}$-Order TGCN}: More general TGCN layers can be designed without 
requiring  $\theta_k=\theta$ for all $k$ as follows:
\begin{equation}
\mathbf{X}^{(l+1)}=\sum_k[(\mathbf{P}+\alpha_{l}\mathbf{I}_N)^k\mathbf{X}^{(l)}\bm{\Theta}_{l,k}].
\end{equation}
Here, we only consider simple diagonal with one parameter $\alpha$ in 
the higher-order polynomials to avoid overfitting and high complexity. 
We will provide some insights into
the choice of different approximation models in 
Section \ref{exper}.

\section{Discussion}
In this section, we discuss the interpretability of
the Taylor approximation of wavelet-kernels, and illustrate its differences with the existing GCNs.

\subsection{Interpretation of the Graph Convolution Approximation}

To understand the connection between the graph convolution filters and the approximated GCNs, we start from the basic graph
wavelet operator in Eq. (\ref{wavelet}). Given the definition of graph convolution in Eq. (\ref{conv}) and the graph spectrum matrix $\mathbf{V}=[\mathbf{f}_1,\mathbf{f}_2,\cdots,\mathbf{f}_N]$, the graph wavelet operator is a convolution-based filter on the signal $\mathbf{x}$ with a parameter vector $\bm{\sigma}$, and is 
denoted by
\begin{equation}
	T_g(x)=\bm{\sigma}\diamond\mathbf{x},
\end{equation}
where the graph Fourier transform of $\bm{\sigma}$ requires $\mathbf{V}^T\bm{\sigma}=\hat{\mathbf{g}}$. Suppose that $\mathbf{f}_i$ is the corresponding eigenvector of $\lambda_i$. Each term in the wavelet-kernel (i.e., $g(\lambda_i)=\mathbf{f}_i^T\bm{\sigma}$) embeds the information of graph spectrum with the parameters $\bm{\sigma}$ by the function $g(\cdot)$. Thus, the optimization on the wavelet-kernel $g(\cdot)$ is equivalent to finding suitable parameters for the 
convolution filter to achieve the goals, such as
minimizing the error between the filtered signal 
and its labels.

Computing the exact graph spectra can be time-consuming.
Nevertheless, 
GCNs and TGCNs approximate the graph-kernel by 
polynomial expansions, i.e., 
\begin{equation}
	T_g(\mathbf{x})
	\approx\sum_k \theta_kT_k(\mathbf{P}){x},
\end{equation}
where $\theta_k$ is the expansion coefficients and $T_k(\cdot)$ is the $k$-th term of polynomials. Here, the $\theta_k$ is a function of $g(\cdot)$, i. e., 
\begin{equation}
	\theta_k={g^{(k)}(a)}/{n!}
\end{equation}
for Taylor polynomials, and
\begin{equation}
	 \theta_k=\frac{2}{\pi}\int_{-1}^{-1} \cos(k\theta)g(\cos(\theta))d\theta
\end{equation}
for Chebyshev polynomials \cite{hammond2011wavelets}.

Since $T_k(\mathbf{P})$ is already determined for each type of expansions, the optimization on the wavelet-kernel function $g(\cdot)$ is transformed into estimating the polynomial coefficients $\theta_k$, i.e., $\bm{\Theta}$, in the GCN propagation layers. More specifically, although we may need prior knowledge on 
the derivatives of $g(\cdot)$ for Taylor coefficients, ${g^{(k)}(a)}$ can be reparametrized as the parameters $\theta_k$ 
of the convolution filter, given our goal to optimize the 
function $g(\cdot)$ by using deep learning networks.
The information of the Taylor expansions remains in the polynomials terms, i.e.,
\begin{equation}
T_k(\mathbf{P})=(\mathbf{P}-{\rm diag}(\bm{\Phi}))^k,
\end{equation}
which differs from the Chebyshev expansions.

\subsection{Comparison with Chebyshev-based GCNs}
Now, we compare the differences between Chebyshev-based GCNs and Taylor-based GCNs to illustrate the benefits of Taylor expansions.

For the Chebyshev expansion, variable $x$ is bounded within
$[-1, 1]$. In the graph wavelet-kernel, each $\lambda$ is 
within [0, $\lambda_{\max}]$ for the Laplacian matrix. 
A simple transformation $\lambda=a(y+1)$, with $a=\lambda_{max}/2$ \cite{hammond2011wavelets} can change variables 
from $\lambda$ to $y$:
\begin{equation}
	y=\frac{2\lambda}{\lambda_{max}}-1,
\end{equation}
which accounts for the use of graph representation, i. e., 
\begin{equation}
	\tilde{\mathbf{L}}=2 \mathbf{L}/\lambda_{\max}-\mathbf{I}_N,
\end{equation}
in Eq. (\ref{cheby}) for Chebyshev-based GCNs.
Unlike Chebyshev expansion, Taylor polynomials do not limit
variable $x$ to an interval (without 
using $\lambda_{\max}$). For this reason, TGCN admits
a more flexible design of graph representation 
$\mathbf{P}$ without limiting the intervals of eigenvalues. 
There is no need to set the value of the largest eigenvalue 
or normalize the graph representation when implementing
TGCNs. We shall test different graph representations when 
we present the experiment results next. 

In addition, the Taylor polynomial gives a more unified
simple design for polynomial terms ${T}_k(\mathbf{P})$. 
In Chebyshev polynomials, each polynomial term is recursive
from its previous terms, thereby making it less expedient
to implement higher-order GCN. However, Taylor polynomials take the same form regardless of $a$
and are regular. The additional parameters $\bm{\Phi}$ also provide benefits for TGCN. In signal propagation,
parameters $\bm{\Phi}$ can be interpreted as the self-influence
from the looping effects. In practical applications, 
such self-influence does occur and may be less obvious 
within the data. For example, 
in the citation networks, the work from highly-cited 
authors may have 
greater impact and trigger the appearance of a 
series of related new works on its own, 
which indicates larger self-influence as well as 
higher impact on other works. 
We will illustrate this impact further in Section VI.

\subsection{Single-layer High-order vs. Multi-layer First-order}
From the design of the first-order TGCN, 
multiple layers of the first-order propagation also 
forms a higher-order polynomial design. However, such design 
with multiple layers of the first-order polynomials is 
different from the single-layer high-order propagation. 
Suppose that a single-layer $k^{th}$-order polynomial 
convolutional filter is written as
\begin{equation}
\mathbf{Z}_k=\sum_k\alpha_k(\mathbf{P}+{\rm diag} (\bm{\beta}))^k\mathbf{X}\bm{\Theta}
\end{equation}
and the first-order polynomial is
\begin{equation}
\mathbf{Z}_1=\alpha(\mathbf{P}+{\rm diag}(\bm{\beta}))\mathbf{X}\bm{\Theta}. 
\end{equation}

For a $k$-layer first-order polynomial convolutional filter, the filtered result can be written as
\begin{align}
\mathbf{Z}^{(k)}&=\alpha_1\cdots\alpha_k(\mathbf{P}+{\rm diag}(\bm{\beta}))^k\mathbf{X}\bm{\Theta}_1\cdots\bm{\Theta}_k\\
&=\alpha'(\mathbf{P}+{\rm diag}(\bm{\beta}))^k\mathbf{X}\bm{\Theta}',
\end{align}
which is one term in the single-layer $k^{th}$-order polynomial convolutional filter. Thus, the multi-layer first-order TGCN is a special case of single-layer higher-order polynomials. Since the high-order polynomial designs have already embedded the high-dimensional propagation of signals over the graphs, we usually apply the single-layer design instead of multi-layer for higher-order TGCNs in implementation to reduce complexity.

\section{Experiments}\label{exper}
We now test the TGCN models in different classification 
experiments. We first measure the influence of depth and 
polynomial orders in different types of TGCNs in citation networks. 
Next, we experiment with different representing matrices and 
propagation models to explore the choice of suitable layer 
design and graph representations. We also report comparative
results with other GCN-like methods in node classification
and demonstrate the practical competitiveness of our newly
proposed framework.

\subsection{Evaluation of Different TGCN Designs}\label{reu}
In this subsection, we evaluate different designs of TGCN to show its overall performance in node classification of citation networks. 
Additional comparisons with other existing methods 
are provided in Section \ref{compare}.
\subsubsection{Experiment Setup}
We first provide the TGCN experiment setup.

\textbf{Datasets}: We use three citation network datasets for validation, i.e., Cora-ML \cite{mccallum2000automating,bojchevski2017deep}, Citeseer \cite{sen2008collective}, and Pubmed \cite{namata2012query}. In these citation networks, published
articles are denoted as nodes and their citation relationships 
are represented by edges. The data statistics 
of these citation networks are summary in Table \ref{sta}. We randomly select the a subset from the original datasets with 10\% training data, 60\% test data and 30\% validation data. 

\begin{table*}[t]
	\caption{Data Statistics}
	\centering
	\begin{tabular}{|c|c|c|c|c|c|}
		\hline
		Datasets     & Number of Nodes     & Number of Edges &  Number of Features &Number of Classes & Label Ratio \\\hline
		Cora & 2708  & 5728 & 1433 & 7& 0.052    \\\hline
		Citeseer     & 3327 & 4614 & 3703& 6&0.036     \\\hline
		Pumbed    & 19717   & 44325& 500& 3&0.0031  \\\hline
	\end{tabular}
	\label{sta}
\end{table*}

\begin{table*}[t]
	\centering
	\begin{threeparttable}[b]
		\centering
		\caption{Overall Accuracy for Different First-Order Methods (Percent)}
		\begin{tabular}{|c|c|c|c|c|}
			\hline
			Methods&Representing Matrix& Cora & Citeseer & Pubmed\\\hline
			GCN&$\tilde{\mathbf{D}}^{-\frac{1}{2}}\tilde{\mathbf{A}}\tilde{\mathbf{D}}^{-\frac{1}{2}}$&78.7$\pm$2.7&68.7$\pm$2.7&78.8$\pm$3.1\\\hline
			Type-1 First-Order TGCN$^1$&$\tilde{\mathbf{D}}^{-\frac{1}{2}}\tilde{\mathbf{A}}\tilde{\mathbf{D}}^{-\frac{1}{2}}$&79.9$\pm$1.7&70.1$\pm1.5$&79.3$\pm$2.0\\\hline
			Type-1 First-Order TGCN$^2$&$\tilde{\mathbf{D}}^{-\frac{1}{2}}\tilde{\mathbf{A}}\tilde{\mathbf{D}}^{-\frac{1}{2}}$&78.8$\pm$3.4&68.9$\pm$1.4&78.9$\pm$2.5\\\hline
			Type-2 First-Order TGCN&$\tilde{\mathbf{D}}^{-\frac{1}{2}}\tilde{\mathbf{A}}\tilde{\mathbf{D}}^{-\frac{1}{2}}$&$\bm{81.3\pm2.7}$&$\bm{70.3\pm2.6}$&$\bm{79.8\pm2.6}$\\\hline
		\end{tabular}
		\label{ovacc}
		\begin{tablenotes} 
			\item [*] For type-1 first-order TGCN with propagation model $\tilde{\mathbf{D}}^{-\frac{1}{2}}\tilde{\mathbf{A}}\tilde{\mathbf{D}}^{-\frac{1}{2}}$, we adjust parameters both manually and automatically. The results are reported in $^1$ for manual and $^2$ for automatic adjustments, respectively.
		\end{tablenotes}
	\end{threeparttable}
\end{table*}

\begin{figure*}[t]
	\centering
	\subfigure{
		\label{sur1}		
		\centering
		\includegraphics[height=1.5in]{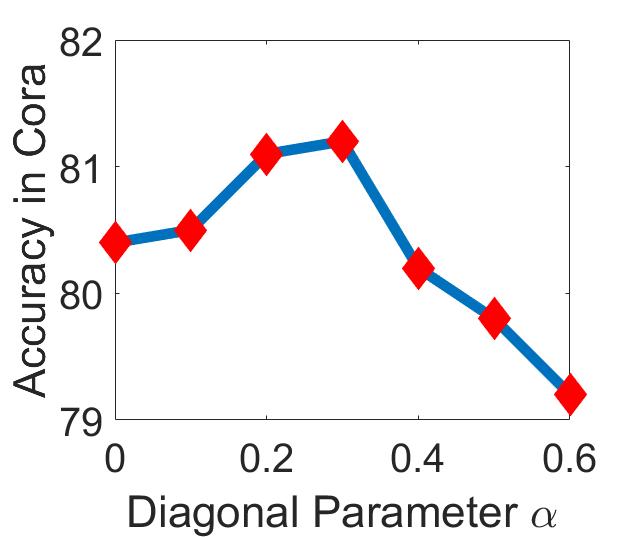}
	}
	\hfill
	\subfigure{
		\label{cyl1}		
		\centering
		\includegraphics[height=1.5in]{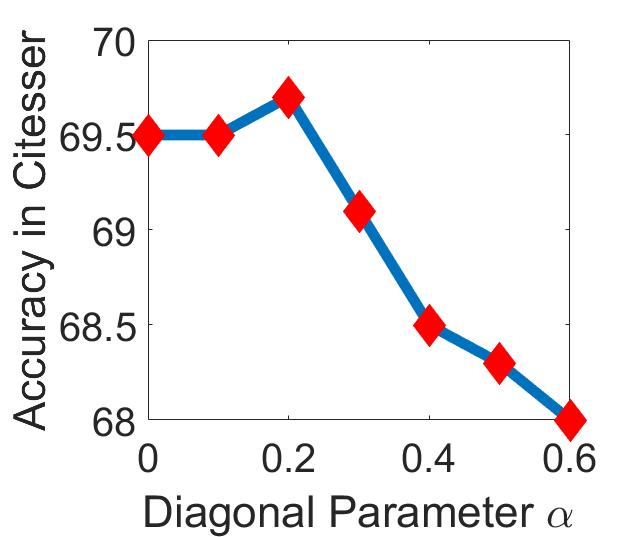}
	}
	\hfill
	\subfigure{
		\label{cub1}
		\centering
		\includegraphics[height=1.5in]{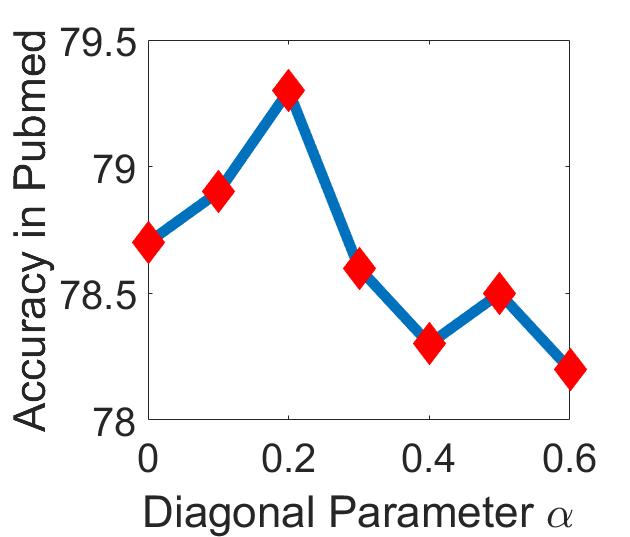}
	}
	\centering
	\caption{Optimal $\alpha$ for Type-1 First-Order TGCN.}
	\label{ori}
\end{figure*}

\textbf{Convolution Layer}: For the first-order TGCN, we consider a two-layer structure designed as follows.
\begin{equation}
\mathbf{Z}=\mathrm{softmax}(G_{\Phi_1}(\mathbf{P})\mathrm{RELU}(G_{\Phi_0}(\mathbf{P})\mathbf{X}\mathbf{W}^{(0)})\mathbf{W}^{(1)}),
\end{equation}
where $G_{\Phi_1}(\mathbf{P})$ is the specific type of TGCN propgation model, and $\mathbf{W}^{(0)}\in\mathbb{R}^{N\times H}$ together with $\mathbf{W}^{(1)}\in\mathbb{R}^{H\times C}$ are the parameters of the $H$ hidden units. 
For the higher-order TGCN, we consider a single-layer structure, i.e.,
\begin{equation}
\mathbf{Z}=\mathrm{softmax}(G_{\Phi}(\mathbf{P})\mathbf{XW}).
\end{equation}
When training the parameters, we let the
neural networks learn the diagonal parameters $\bm{\beta}$ for 
type-2 first-order TGCN, and $\alpha$ for higher-order TGCN. 
We applied Adam optimizer \cite{kingma2014adam} for network training.
For type-1 first-order TGCN, we apply both manual and automatic adjustments
on the diagonal parameters $\alpha$. We train the projection parameter $\mathbf{W}$ 
for a variety of TGCNs. For graph representing matrices, we first apply the normalized $\tilde{\mathbf{D}}^{-\frac{1}{2}}\tilde{\mathbf{A}}\tilde{\mathbf{D}}^{-\frac{1}{2}}$ to measure the effects of layer depth, polynomial orders, and propagation models, respectively. We then test the
results of  different representing matrices $\mathbf{P}$ to 
gain insights and guidelines on how to select suitable 
graph representations.

\textbf{Implementation}: Let $\mathcal{V}_l$ be the set of labeled examples and $Y_i$ denote
the labels. We evaluate the cross-entropy error over all labeled 
examples to train parameters, i.e.,
\begin{equation}
\mathcal{L}=-\sum_{i\in\mathcal{V}_l}\sum_{j=1}^{L}Y_{ij}\ln Z_{ij}.
\end{equation}

\textbf{Hyperparameter}: For fair comparison of different designs of TGCN, we use the similar hyperparameters, with 
dropout rate $d=0.5$, learning rate $r=0.01$ and weight decay $w=5\times 10^{-4}$. 
For two-layer TGCNs, we let the number of hidden units be
$H=40$. For the higher-order TGCNs, we use fewer 
hidden units to reduce the complexity.
\begin{table*}[t]
	\centering
	\begin{threeparttable}[b]
		\centering
		\caption{Accuracy for Higher-Order Propagation Model (Percent)}
		\begin{tabular}{|c|c|c|c|c|}
			\hline
			Num of Layers&Polynomial Order $k$& TGCN Type & Cora & Citeseer\\\hline
			2-Layer&$1^{st}$-order&Type-1&81.4&70.1\\\hline
			2-Layer&$1^{st}$-order&Type-2&\textbf{81.5}&\textbf{70.5}\\\hline
			2-Layer&$2^{nd}$-order&Type-3&79.5&65.7\\\hline
			2-Layer&$2^{nd}$-order&Type-4&79.3&66.9\\\hline
			1-Layer&$1^{st}$-order&Type-1&75.6&67.3\\\hline
			1-Layer&$2^{nd}$-order&Type-3&78.4&69.0\\\hline
			1-Layer&$3^{rd}$-order&Type-3&79.1&68.4\\\hline
			1-Layer&$2^{nd}$-order&Type-4&76.3&67.9\\\hline
			1-Layer&$3^{rd}$-order&Type-4&78.6&70.2\\\hline
		\end{tabular}
		\label{ovacc1}
		\begin{tablenotes} 
			\item [*] For each method, we test with the representing matrix $\tilde{\mathbf{D}}^{-\frac{1}{2}}\tilde{\mathbf{A}}\tilde{\mathbf{D}}^{-\frac{1}{2}}$.
		\end{tablenotes}
	\end{threeparttable}	
\end{table*}

\begin{figure*}[t]
	\centering
	\subfigure{
		\centering
		\includegraphics[height=1.4in]{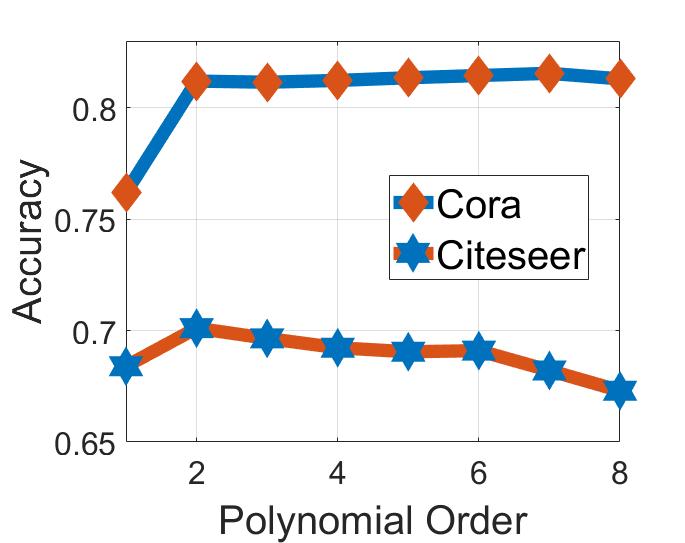}
	}
	\hspace{-5mm}
	\subfigure{
		\centering
		\includegraphics[height=1.4in]{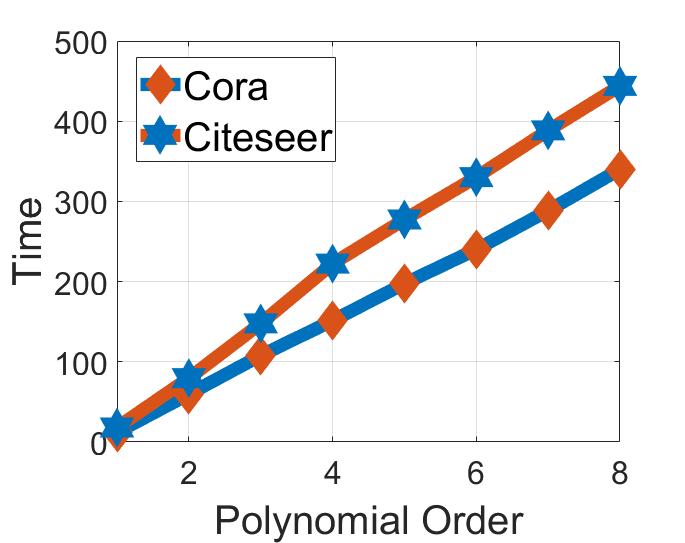}
	}
	\hspace{-5mm}
	\subfigure{
		\centering
		\includegraphics[height=1.4in]{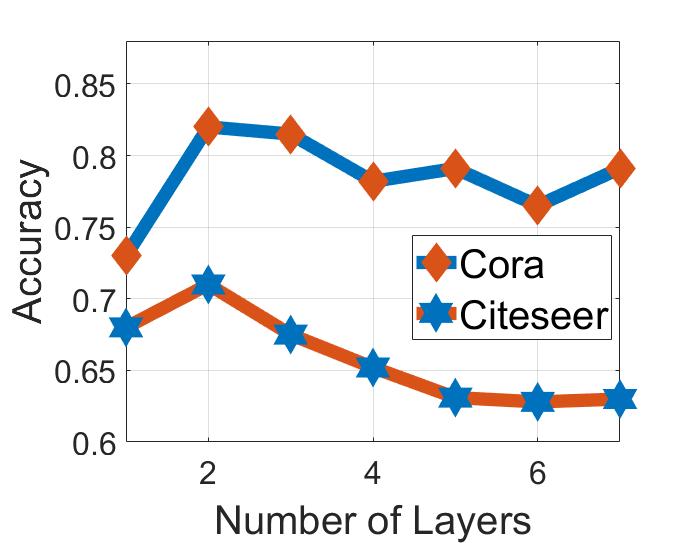}
	}
	\hspace{-5mm}
	\subfigure{
		\centering
		\includegraphics[height=1.4in]{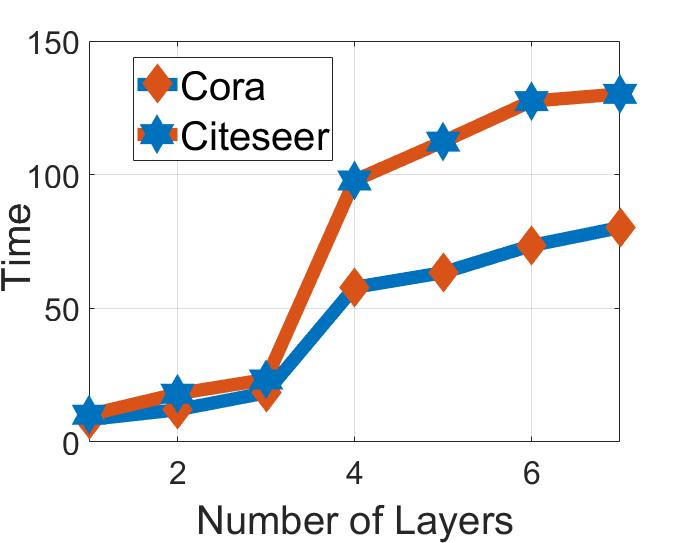}
	}
	\centering
	\caption{Results of Different Polynomial Orders and Network Depth.}
	\label{ori1}
	
\end{figure*}

\begin{figure*}[t]
	\centering
	\subfigure[2-Layer Type-1 TGCN.]{
		\centering
		\includegraphics[height=1.4in]{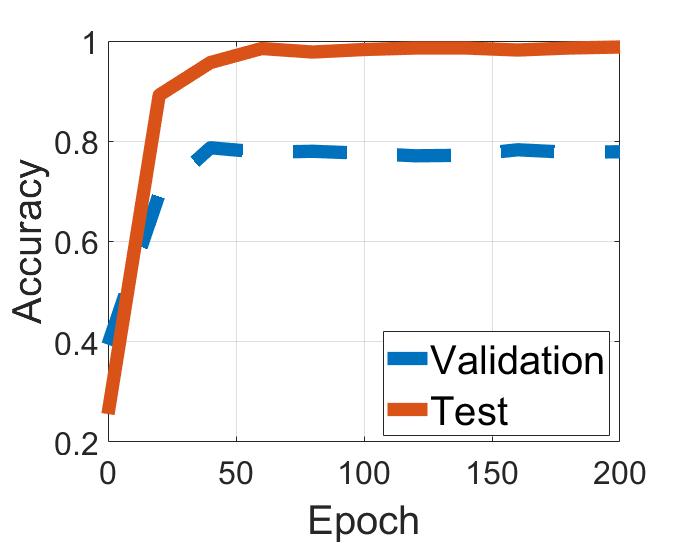}
	}
	\hspace{-5mm}
	\subfigure[2-Layer Type-2 TGCN.]{
		\centering
		\includegraphics[height=1.4in]{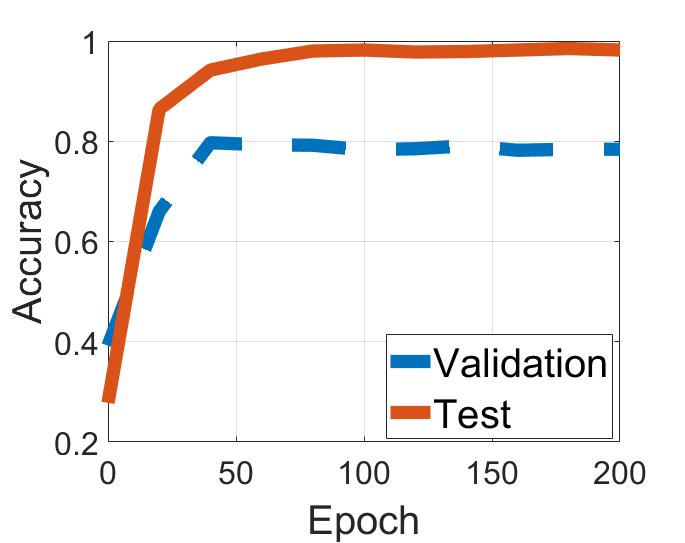}
	}
	\hspace{-5mm}
	\subfigure[1-Layer 2-order Type-3 TGCN.]{
		\centering
		\includegraphics[height=1.4in]{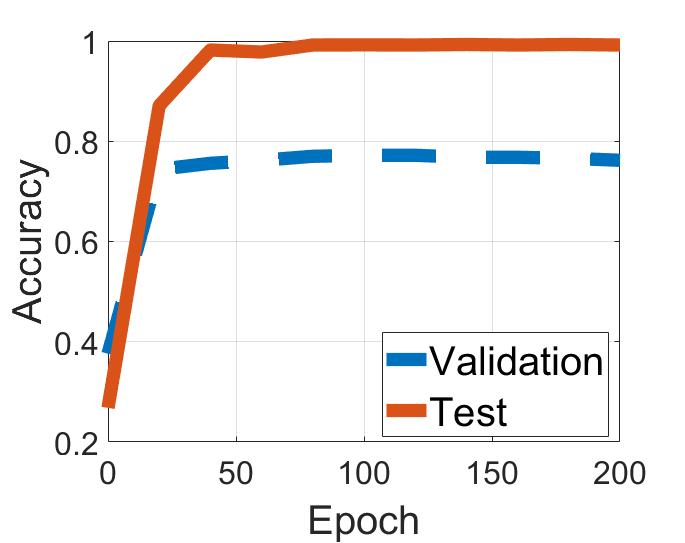}
	}
	\hspace{-5mm}
	\subfigure[1-Layer 2-order Type-4 TGCN.]{
		\centering
		\includegraphics[height=1.4in]{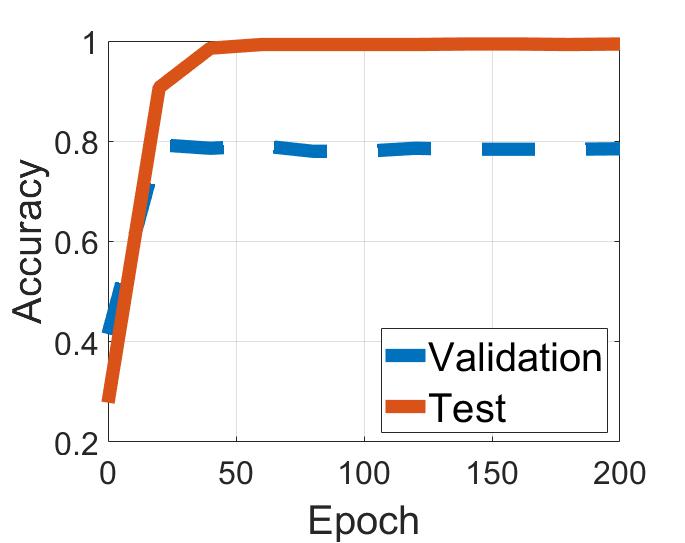}
	}
	\centering
	\caption{Convergence of different TGCN models.}
	\label{ori2}
\end{figure*}

\subsubsection{Performances of Different TGCN Propagation Models}
We first measure the performances of first-order TGCNs 
by comparing different propagation models against the traditional GCNs. 
Note that, the type-1 first-order TGCN degenerates into traditional GCN 
if the diagonal parameter $\alpha=0$. To explore the difference
between GCN 
and type-1 first-order TGCN, we adjust the parameter $\alpha$ both manually and 
automatically.
The overall accuracy is reported in Table \ref{ovacc}. 

For type-1 first-order TGCN, our manual adjustment results in higher accuracy 
than automatic adjustment. This indicates
that the deep learning network may be more susceptible to 
local convergence when learning $\alpha$ by itself. Usually, the optimal $\alpha$ for type-1 TGCN 
would be in $[0.15,0.35]$ as shown in Fig. \ref{ori}, while the TGCN
degenerates
to the traditional GCN for $\alpha=0$. 
Generally, type-2 first-order TGCN has 
a clear advantage in accuracy for all datasets, 
whereas type-1 frist-order TGCN exhibits only marginal improvement 
given suitable choice of diagonal parameters.

We then compare different propagation models with different orders of polynomials under the same experiment setup. We start with 100 
Monte Carlo random initializations and report the average accuracy of each 
model in Table \ref{ovacc1}. The first-order TGCNs generally achieve
superior overall accuracy than 
higher-order TGCNs. Higher-order methods may
be occasionally better for some datasets.
Recall that the multi-layer first-order TGCN is a special case of single-layer higher-order polynomials as illustrated in Section \ref{model}.
The large number of parameters in the higher-order methods
may leads to highly likelihood of overfitting and local convergence,
thereby contributing to their less impressive outcomes.

\subsubsection{Depth and Polynomial Orders}
We also test the effects
of different polynomial orders and layer numbers. The accuracy and training time (200 epochs) for different polynomial orders (Type-3 as an example) are shown in the first group of plots in Fig. \ref{ori1}.
We note that performance improvement appears
to saturate beyond certain polynomial order.
Since the results also indicate  growing training time for higher order
polynomials, it would be more efficient to limit the polynomial order
to 2 or 3. We also test the performance of first-order 
TGCNs (Type-2 as an example) with different layers in the last two 
plots in Fig.\ref{ori1}, 
which also show that a 2-layer or 3-layer TGCN would typically suffice.

\subsubsection{Convergence}
We evaluate the convergence of different TGCN models in Fig. \ref{ori2}. Here, we report the accuracy of training data and validation data for the Cora dataset. 
Form the results, we can see that TGCN models can converge well in the citation network datasets.

\subsubsection{Training Efficiency}
\begin{table}[t]
	\centering
	\begin{threeparttable}[b]
		\centering
		\caption{Training Time per Epoch}
		\begin{tabular}{|c|c|c|c|c|c|}
			\hline
			Dataset& GCN&2L1KT1&2L1KT2&2L2KT3&2L2KT4\\\hline
			Cora&21.2ms&24.3ms&34.0ms&506.1ms&479.5ms\\\hline
			Citesser&30.5ms&33.4ms&42.9ms&681.1ms&726.5ms\\\hline
			Dataset&1L2KT3&1L3KT3&1L2KT4&1L3KT4 &1L1KT1  \\\hline
			Cora&253.8ms&463.2ms&244.5ms&641.0ms&11.3ms \\\hline
			Citesser&339.3ms&609.7ms&314.7ms&1085.6ms&33.2ms \\\hline
		\end{tabular}
		\label{ovtime}
		\begin{tablenotes} 
			\item [*] Different methods are measured in a CPU-only implementation.
			\item [*] $a$L$b$KT$c$ is short for a Type-$c$ TGCN with $a$ layers and polynomial order $k=b$.
		\end{tablenotes}
	\end{threeparttable}	
\end{table}
We compare the training efficiency for different methods based on the average training time for each epoch over $200$ epochs in total. 
We use the same number of hidden units for multi-layer graph convolutional networks to 
be fair. From the results of Table \ref{ovtime}, 2-layer TGCN takes nearly
$10\%$ longer than traditional 2-layer GCN because of
the larger number of parameters and matrix computations. 
Moreover, larger layer depth and higher polynomial order also increase
TGCN training time.

\begin{table}[t]
	\centering
	\caption{Performance of Different Graph Representations}
	\begin{tabular}{|l|l|l|l|}
		\hline
		Dataset& Cora & Citeseer & Pubmed \\ \hline\hline
		\multicolumn{4}{|l|}{Type-1 TGCN (2-Layer/Auto-training on $\alpha$)}       \\ \hline
		$\mathbf{A}$   &   76.0   &   67.8       &     77.9   \\ \hline 
		$\mathbf{D}^{-1}\mathbf{A}$&  79.3    &     \textbf{70.2}     &    \textbf{80.5}    \\ \hline
		$\tilde{\mathbf{D}}^{-\frac{1}{2}}\tilde{\mathbf{A}}\tilde{\mathbf{D}}^{-\frac{1}{2}}$&   78.6   &   69.2       &  78.9      \\ \hline
		$0.1\times(\mathbf{I}-0.9\tilde{\mathbf{A}})^{-1}$&  \textbf{80.1}    &    70.1      &    79.3    \\ \hline\hline
		\multicolumn{4}{|l|}{Type-2 TGCN (2-Layer)}       \\ \hline\hline
	    $\mathbf{A}$	&   76.8   &   67.6        & 77.8       \\ \hline
		$\mathbf{D}^{-1}\mathbf{A}$	&   \textbf{81.9}   &       70.0   &     \textbf{80.3}   \\ \hline
		$\tilde{\mathbf{D}}^{-\frac{1}{2}}\tilde{\mathbf{A}}\tilde{\mathbf{D}}^{-\frac{1}{2}}$&   81.6   &    70.1      &   79.4     \\ \hline
			$0.1\times(\mathbf{I}-0.9\tilde{\mathbf{A}})^{-1}$&    81.8  &    \textbf{70.1}      &    80.1    \\ \hline\hline
		\multicolumn{4}{|l|}{Type-3 TGCN (1-Layer $2^{nd}$-Order)}       \\ \hline\hline
		$\mathbf{A}$& 77.9     &68.5          &      /  \\ \hline
			$\mathbf{D}^{-1}\mathbf{A}$&  \textbf{79.0}    &      68.9    &       / \\ \hline
		$\tilde{\mathbf{D}}^{-\frac{1}{2}}\tilde{\mathbf{A}}\tilde{\mathbf{D}}^{-\frac{1}{2}}$&      78.5 & \textbf{69.1}         &    /    \\ \hline
			$0.1\times(\mathbf{I}-0.9\tilde{\mathbf{A}})^{-1}$& 78.8     &    68.7      &      /  \\ \hline\hline
		\multicolumn{4}{|l|}{Type-4 TGCN (1-Layer $2^{nd}$-Order)}       \\ \hline\hline
	    $\mathbf{A}$	&  75.2   &    67.0    &     /   \\ \hline
			$\mathbf{D}^{-1}\mathbf{A}$&  \textbf{79.7}     &    \textbf{68.2}      &      /  \\ \hline
		$\tilde{\mathbf{D}}^{-\frac{1}{2}}\tilde{\mathbf{A}}\tilde{\mathbf{D}}^{-\frac{1}{2}}$& 76.3      &     67.9       &     /   \\ \hline
			$0.1\times(\mathbf{I}-0.9\tilde{\mathbf{A}})^{-1}$&   78.3   &    68.0      &    /    \\ \hline
	\end{tabular}
\label{graphre}
\end{table}

\subsubsection{Different Choices of Graph Representations}
Thanks to the flexibility of $\mathbf{P}$ in the TGCN, it is 
interesting to explore different graph representations in different types of TGCN propagation models. Note that the Laplacian-based model
can be written in the form of the adjacency matrix and
a corresponding diagonal matrix, which can be included
within the category of adjacency-based convolutional
propagation models in TGCNs. Thus, we mainly investigate 
adjacency-based representation for TGCNs. 
Since there is no constraint on the range of eigenvalues, we test 
the effect of normalization. 
More specifically, we use the original adjacency matrix $\mathbf{A}$, the normalized adjacency matrix $\mathbf{D}^{-1}\mathbf{A}$, the traditional GCN propagation $\tilde{\mathbf{D}}^{-\frac{1}{2}}\tilde{\mathbf{A}}\tilde{\mathbf{D}}^{-\frac{1}{2}}$, and the pagerank $0.1\times(\mathbf{I}-0.9\tilde{\mathbf{A}})^{-1}$ \cite{klicpera2018predict}. For the higher-order TGCNs, 
we mainly focus on Cora and Citeseer datasets due to complexity.

The experiment results are shown in Table \ref{graphre}. The results
show that normalized representations exhibit better performance than the 
unnormalized graph representation for TGCN propagation. More specifically, the normalized adjacency matrix achieves better performance in most 
TGCN designs. Although the pagerank propagation also shows good performance in some of the TGCN categories, the high-complexity of calculating the inverse matrix is detrimental to its applications in 
higher-order TGCNs. Note that we only test some of the common graph representations of $\mathbf{P}$ in our TGCN designs. The Taylor expansions allow a more flexible combination with other existing GCN propagations, 
such as pagerank and GCN propagations. We plan to further investigate  alternative graph representations in future works.

\subsubsection{Discussion}
In terms of formulation, type-1 TGCN is an extension 
of GCN, which allows flexible self-influence for each node.
Type-2 TGCN is an extension of type-1 TGCN, where different self-influence parameters are 
assigned for different nodes. In practical applications, 
such self-influence does exist and may be less obvious.
Type-2 TGCN allows different self-influence parameters 
to be learned while training, which may lead to better performance 
in the citation networks.  Higher-order TGCNs, as discussed in Section \ref{model},
are
different from multi-layer TGCN as various orders
may lead to different performances. 
However, to mitigate complexity concerns, lower-order TGCNs 
are more efficient 
in applications. With the steady advances of 
computation hardware, 
higher-order TGCNs is expected to play increasingly 
important roles in future data analysis.
In addition, the TGCN designs show a scalable combination 
with existing GCN propagations and graph representations.

\subsection{Comparison with Several Existing Methods}\label{compare}
In this section, we compare our proposed TGCNs with several state-of-the-art methods in three different tasks: 1) node classification in the citation network; 2) point cloud segmentation; and 3) classification in synthetic datasets.
\begin{table}[t]
	\centering
	\caption{Comparison with Other Methods in the Citation Networks}
	\begin{tabular}{|l|l|l|l|}
		\hline
		& Cora           & Citeseer       & Pubmed         \\ \hline
		GCN         & 85.77          & 73.68          & 88.13          \\ \hline
		GAT         & 86.37          & 74.32          & 87.62          \\ \hline
		GIN         & 86.20          & 76.80          & 87.39          \\ \hline
		Geom-GCN-I  & 85.19          & 77.99          & \textbf{90.05} \\ \hline
		Geom-GCN-P  & 84.93          & 75.14          & 88.09          \\ \hline
		Geom-GCN-S  & 85.27          & 74.71          & 84.75          \\ \hline
		APPNP       & 86.88          & 77.74          & 88.41               \\ \hline
		Type-1 TGCN & 86.79          & 77.82          & 87.99          \\ \hline
		Type-2 TGCN & \textbf{87.23} & \textbf{78.31} & 86.89          \\ \hline
	\end{tabular}
\label{stoa}
\end{table}
\subsubsection{Classification in the Citation Networks}
We first compare the proposed TGCN frameworks with other GCN-style methods in the citation networks summarized in Table \ref{sta}, but with a different splits on the datasets. Instead of randomly splitting a subset of the citation networks, we apply similar data splits as \cite{pei2020geom} with 60\%/20\%/20\% for training/testing/validation datasets. We compare our methods with graph convolutional networks (GCN) \cite{kipf2017semi}, geometric graph convolutional networks (Geom-GCN) \cite{pei2020geom}, graph attention networks (GAT) \cite{velivckovic2017graph}, approximated personalized propagation of neural predictions (APPNP) \cite{klicpera2018predict}, and graph isomorphism networks (GIN) \cite{xu2018powerful}. In TGCN propagation, we use the normalized 
adjacency matrix, i.e., $\mathbf{P=D}^{-1}\mathbf{A}$, and set the number of layer to be two. The test results are reported in Table \ref{stoa}. The results show that the proposed TGCN achieves competitive performance 
against various GCN-type methods.  
Together with the  results presented in Section \ref{reu}, 
our experiments demonstrate that TGCN is very
effective in
node classification over the citation graphs despite data splits.
\begin{table*}[t]
	\centering
	\begin{threeparttable}[b]
		\centering
		\caption{Mean Accuracy in ShapeNet Dataset.}
		\begin{tabular}{|c|c|c|c|c|c|c|c|c|}
			\hline
			& Type-2 TGCN$^1$                            & Type-1 TGCN$^1$   & Type-2 TGCN$^2$    & Type-1 TGCN$^2$  & GSP & HGSP   & GCN          & PointNet
			   \\ \hline
			Airplane      & {{0.7551}}  & \textbf{0.7883}  &0.7785 & 0.7824 &0.5272
			 & 0.5566
			 & 0.7660       &   \textit{0.834}   \\ \hline
			Bag           & {0.9165}& {0.9202}  & 0.9399& \textbf{\textit{0.9409}} &0.5942 &0.5620 & 0.9176       &  0.787   \\ \hline
			Cap           & \textbf{0.7670}                         & 0.7599       & 0.7479&0.7577 & 0.6698&   0.7212  & 0.7629      &  \textit{0.825}   \\ \hline
			Car           & \textbf{0.7114}                        & 0.7052       & 0.7018& 0.6976 &0.3785& 0.3702    & 0.6790      &     \textit{0.749}  \\ \hline
			Chair         & {0.6603}                        & 0.6197       &\textbf{0.7412} &0.6885 & 0.4701&   0.5782  & 0.6430       &   \textit{0.896}   \\ \hline
			Earphone      & 0.7037                                 & {0.7135}& 0.7606 &\textbf{\textit{0.7712}} &0.4706 & 0.5637  & 0.7054      &    0.730  \\ \hline
			Guitar        & \textbf{0.8449}                        & 0.8401         & 0.8176&0.8265 & 0.5731&  0.5889 & 0.8304      &   \textit{0.915}   \\ \hline
			Knife         & \textbf{0.7675}                        & 0.7474         &0.7610 & 0.7614&0.6395 &  0.7045 & 0.7502        & \textit{0.859}   \\ \hline
			Lamp          & 0.7787                                 & {0.7836} & \textbf{0.7853}& 0.7712 &0.2510 & 0.3112 & 0.7821      &  \textit{0.808}    \\ \hline
			Laptop        & 0.8142                                 & {0.8365}  &0.8185 &0.8275 &0.6704 & \textbf{0.9077}& 0.8272       & \textit{0.953}   \\ \hline
			Motorbike     & 0.7167                                 & 0.7183          &0.7239 &0.7623 &\textit{\textbf{0.7663}} & 0.7588 & {0.7297}  &0.652 \\ \hline
			Mug           & 0.9324                                 & \textbf{\textit{0.9436}}  &  0.9348 & 0.9376 &0.7465 &0.6290 & 0.9302         & 0.930  \\ \hline
			Pistol        & 0.7362                                 & \textbf{0.7387}  &0.7145 &0.7107 &0.5336 &0.6277 & 0.7205        & \textit{0.812}   \\ \hline
			Rocket        & \textit{\textbf{0.7895}}                        & 0.7712          & 0.7824&0.7742 &0.4792 & 0.5481 & 0.7807        &  0.579  \\ \hline
			Skateboard    & {0.8323}                                 & 0.8364       &0.8230 &\textit{\textbf{0.8493}} &0.6088 &   0.5440  & {0.8176}  & 0.728\\ \hline
			Table         & 0.7984                                 & 0.8154         &0.7972 & 0.\textbf{\textit{8194}}&0.4726 & 0.4568  & {0.8164}         & 0.806  \\\midrule[1pt]
			\textbf{Mean} & 0.7828                             & {0.7836}& 0.7892&\textbf{0.7966} & 0.5532&0.5893& 0.7788    & \textit{0.803}    \\ \hline
		\end{tabular}	\label{accpts}
		\begin{tablenotes} 
			\item [1] TGCN$^1$ applies the graph representation $\mathbf{P}=\tilde{\mathbf{D}}^{-\frac{1}{2}}\tilde{\mathbf{A}}\tilde{\mathbf{D}}^{-\frac{1}{2}}$.
			\item [2] TGCN$^2$ applies the graph representation $\mathbf{P}=\mathbf{D}^{-1}\mathbf{A}$.
			\item [*] The best graph-based method is marked in bold font.
			\item [*] The best method is marked in italic script.
		\end{tablenotes}
	\end{threeparttable}
\end{table*}

\subsubsection{Point Cloud Segmentation}
We next test the performance of TGCN in the point cloud segmentation.
The goal of point cloud segmentation is to identify and cluster points in
a point cloud that share similar features into their respective regions \cite{nguyen20133d}. 
The segmentation problem can be posed as a semi-supervised classification 
problem if the labels of several samples are known \cite{lv2012semi}. 

\textbf{Datasets and Baselines}: In this work, we use the ShapeNet datasets \cite{chang2015shapenet,yi2017large} 
as examples. In this dataset, there are 16 object categories, each of which 
may contain 2-6 classes. We compare both type-1 and type-2 TGCNs 
with traditional GCN in all categories. To explore the features extracted from the graph convolution, we also compare the proposed methods with geometric clustering-based methods, including the graph spectral clustering (GSP) and hypergraph spectral clustering (HGSP) \cite{zhang2020hypergraph1}. A comparison with a specifically-designed neural networks for point cloud segmentation, i. e., PointNet \cite{qi2017pointnet}, is also reported.

\textbf{Experiment Setup}:
\begin{itemize}
	\item To implement TGCN efficiently, we randomly pick 
	20 point cloud objects from each category, and randomly set $70\%$ points as training data 
	with labels while using the remaining points
	as the test data for each point cloud.
	We use $k$-nearest neighbor method to construct an adjacency matrix $\mathbf{A}$ with 
	elements 
	$a_{ij}=1, \ 0$ to indicate the presence or the absence of connection between two nodes $i,j$,
	respectively. 
	More specifically, we set $k=20$ in graph construction for all point clouds. For the GCN-like methods, we fix the number of layer as two and the number hidden units as 40 to ensure a fair comparison. 
	\item For the spectral-clustering based methods, we use the hypergraph stationary process \cite{zhang2020hypergraph} to estimate the hypergraph spectrum for the HGSP-based method, and apply the Gaussian distance model \cite{8101025} to construct the graph for the GSP-based method. The $k$-means clustering is applied for segmentation after obtaining the key spectra.
\end{itemize}

\begin{figure*}[t]
	\centering
	\subfigure[Ground Truth.]{
		\centering
		\includegraphics[height=2.5in]{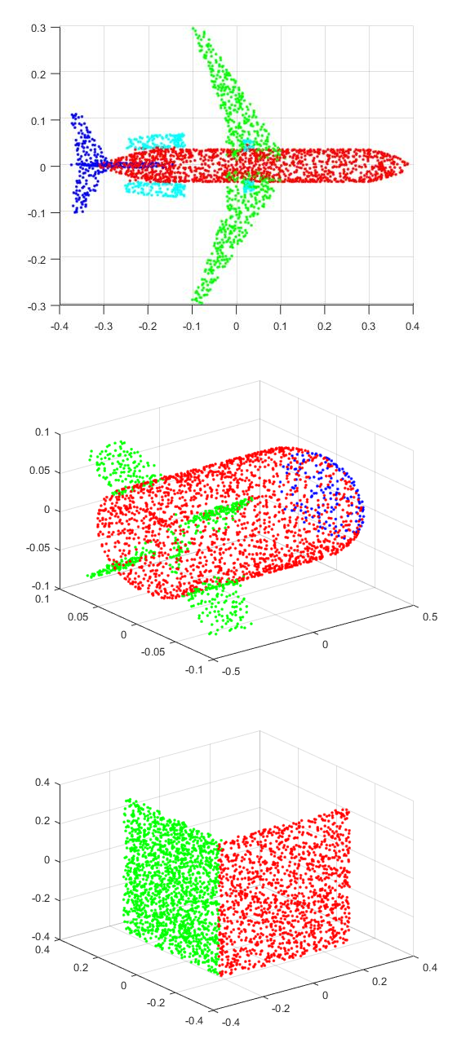}
	}
	\hfill
	\subfigure[TGCN.]{
		\centering
		\includegraphics[height=2.5in]{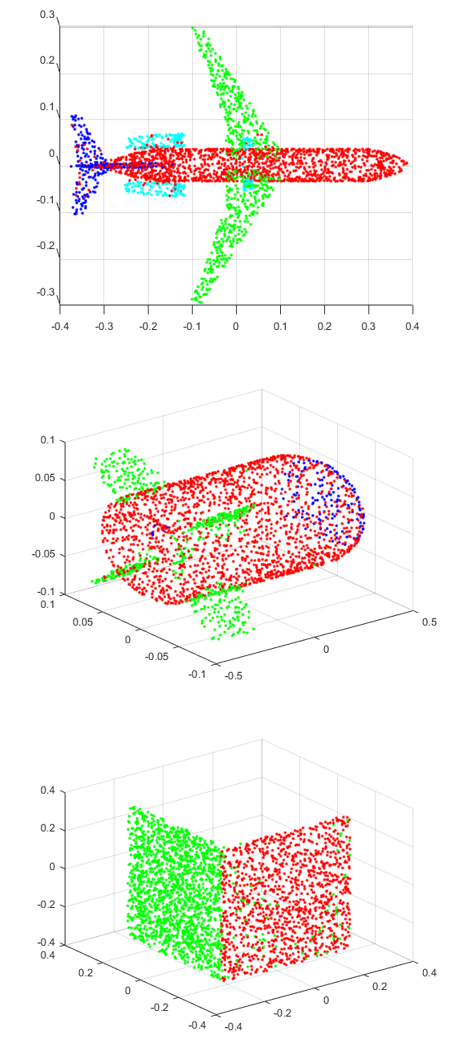}
	}
	\hfill
	\subfigure[GCN.]{
		\centering
		\includegraphics[height=2.5in]{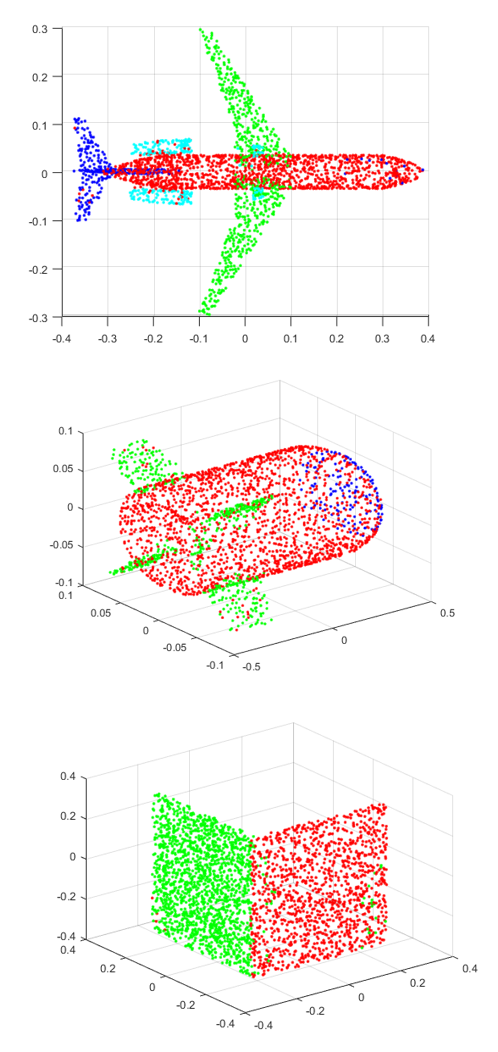}
	}
	\hfill
	\subfigure[GSP.]{
		\centering
		\includegraphics[height=2.5in]{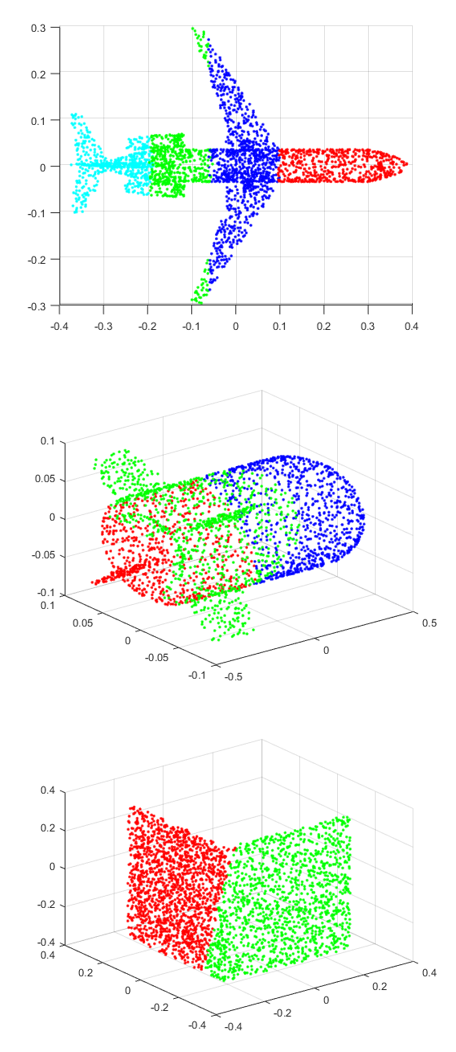}
	}
	\hfill
	\subfigure[HGSP.]{
	   \label{cub11}
		\centering
		\includegraphics[height=2.5in]{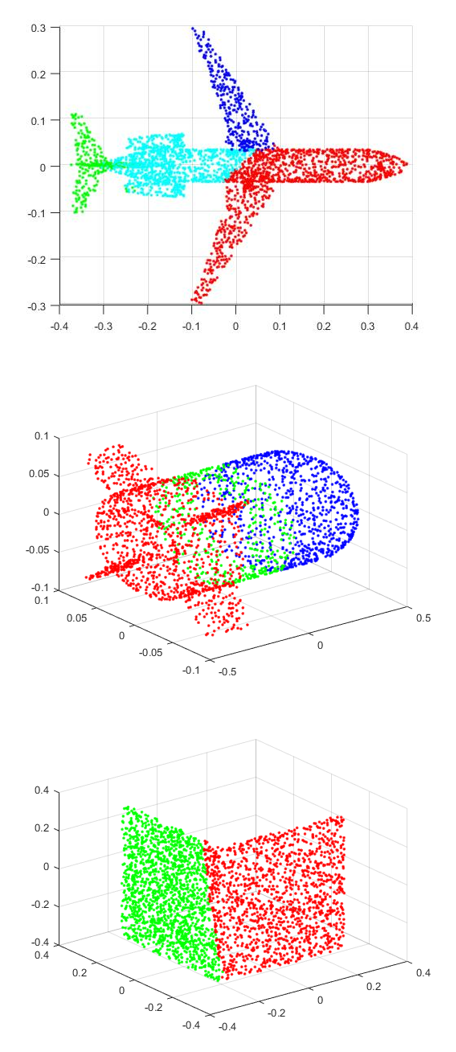}
	}
	\centering
	\caption{Examples of Point Cloud Segmentation.}
	\label{pts}
\end{figure*}

\textbf{Experiment Results}:
The overall accuracies of different methods are reported in Table 
\ref{accpts}.
 The resulting mean accuracy 
of segmentation illustrates that each method under comparison may
exhibit some unique strength in different categories. 
The PointNet exhibits an overall largest mean accuracy, since it is specifically designed for point cloud segmentation while the GCN-like methods are directly applied in classifying the points without adjustment. Even though, the TGCN still shows better performances in some of the categories than PointNet, such as table, skateboard, mug, motorbike, earphone and bag. Compared to the traditional GCN,
TGCN generally achieves higher accuracy 
and provides the better performance in most of the categories. For the graph representation, the normalized adjacency matrix performs better than the traditional GCN propagation matrix in the TGCN designs for point cloud segmentation.
The clustering-based methods perform worse than the classification-based methods, since they use no prior knowledge of the ground truth. However, they still achieve satisfying performances in the point clouds with fewer classes and more regular shapes.

 To further illustrate different methods, several visualized segmentation results are presented in Fig. \ref{pts}. Since different TGCN exhibits similar visualized results, we report type-2 TGCN with $\mathbf{P}=\mathbf{D}^{-1}\mathbf{A}$ as an example. From the results, we see that the classification-based methods, i. e., TGCN and GCN, exhibit similar results, where errors are distributed 
 scatteredly over the point clouds. Generally, the TGCN results show fewer errors than those of the traditional GCN, such as in the wings of rockets and planes. Different from GCN-like methods trained according to the ground truth, the clustering-based methods show different results. For example, in the first row of Fig. \ref{cub11}, although the segmentation result differ from the ground
 truth, these results still make sense by grouping two wings to
 different classes. In addition, the errors of clustering-based methods are grouped together in the intersections of two classes. It will be interesting to explore the integration of classification-based methods and clustering-based methods to extract more features of point clouds in the future works.

\begin{table}[t]
	\centering
	\caption{Tests in the Synthetic Datasets}
	\begin{tabular}{|l|l|l|l|l|}
		\hline
		& \multicolumn{2}{l|}{Pattern} & \multicolumn{2}{l|}{Cluster} \\ \hline
		Method      & Layer       & Accuracy       & Layer       & Accuracy       \\ \hline
		GCN         & 4           & 63.88          & 4           & 53.4         \\ \hline
		GraphSage   & 4           & 50.52          & 4           & 68.5         \\ \hline
		GAT         & 4           & 75.84           & 4           & 58.3           \\ \hline
		GateGCN     & 4           & 84.53           & 4           & 60.4           \\ \hline
		GIN         & 4           & 85.59          & 4           & 58.3           \\ \hline
		RingGNN     & 2           & 86.24          & 2           & 42.4           \\ \hline
		Type-1 TGCN & 4           & 85.43          & 4           & 58.8           \\ \hline
		Type-2 TGCN & 4           & 85.87          & 4           & 61.2           \\ \hline
	\end{tabular}
\label{cc}
\end{table}

\subsubsection{Node Classification in Synthetic Datasets}
To provide more comprehensive results of the proposed TGCNs, we also test on the node classification on synthetic datasets following the stochastic block models (SBM), which are widely used to model communities in social networks. We randomly generated SBM graphs with a smaller data size following the similar strategies as \cite{dwivedi2020benchmarking} with 70\%/15\%/15\% for the training/test/validation. In designing TGCNs, we set the graph representation $\mathbf{P}=\mathbf{D}^{-1}\mathbf{A}$. An overall result is reported in Table \ref{cc}. Although the proposed TGCNs do not deliver
the top performance in each test case, they are generally
robust and competitive across the board.

\section{Conclusion}
In this work, we explore the inherent connection 
between GSP convolutional spectrum 
wavelet and the GCN vertex propagation. 
Our work shows that 
spectral wavelet-kernel can be approximated in vertex domain 
if it admits a polynomial approximation. 
In addition, we develop an efficient and simple
alternative design of GCN layers
based on the simple Taylor expansion (TGCN), 
which exhibits computation efficiency and outperforms 
a number of state-of-the-art GCN-type methods. 
Our work derives practial guidelines on the selection of
representing matrix and  the propagation model for TGCN designs. 
Our interpretability discussion presents good insights into 
Taylor approximation of graph convolution. 

\textbf{Future Works}: In existing works, the design of GCN 
centers on performance while neglecting the 
underlying connection to the original 
graph spectrum convolution in GSP. 
Evaluating GCN from the GSP-perspective may provide 
better insights for layer design and basis for performance 
analysis in the future. 
It is equally important for future works
to explore the choice of representing matrix and propagation models to 
further enhance the performance and robustness of GCNs.

\bibliographystyle{IEEEtran}
\bibliography{list}

\end{document}